\numberwithin{equation}{section}
\newtheorem{theorem}{Theorem}[section]
\newtheorem{corollary}[theorem]{Corollary}
\newtheorem{proposition}[theorem]{Proposition}
\theoremstyle{definition}
\theoremstyle{remark}
\newcommand{\bdm}{\begin{displaymath}}
\newcommand{\edm}{\end{displaymath}}
\newcommand{\bdn}{\begin{eqnarray}}
\newcommand{\edn}{\end{eqnarray}}
\newcommand{\bay}{\begin{array}{c}}
\newcommand{\eay}{\end{array}}
\newcommand{\ben}{\begin{enumerate}}
\newcommand{\een}{\end{enumerate}}
\newcommand{\beq}{\begin{equation}}
\newcommand{\eeq}{\end{equation}}
\newcommand{\eps}{\varepsilon}
\newcommand{\dist}{\mathrm{dist}}
\newcommand{\norm}[1]{\left\lVert #1 \right\rVert}
\newcommand{\R}{\mathbb{R}}
\newcommand{\N}{\mathbb{N}}
\newcommand{\F}{\mathcal{F}}
\newcommand{\E}{\mathcal{E}}
\newcommand{\B}{\mathcal{B}}
\newcommand{\I}{\mathcal{I}}
\newcommand{\PP}{\mathcal{P}}
\newcommand{\xbf}{\mathbf{x}}
\newcommand{\one}{{\ensuremath {\mathds 1} }}
\newcommand{\half}{\frac{1}{2}}
\newcommand{\intR}{\int_{\R ^2}}
\newcommand{\rhoP}{\rho_{\Psi}}
\newcommand{\rhoPF}{\rho_{\Psi_F}}
\newcommand{\Barg}{\B}
\newcommand{\BargN}{\B ^{N}}
\newcommand{\cLau}{c _{\rm Lau}}
\newcommand{\intN}{\I_N}
\newcommand{\Ker}{\mathrm{Ker} (\intN)}
\newcommand{\ZN}{\mathcal{Z}_{N}}
\newcommand{\PsiLau}{\Psi_{\rm Lau}}
\newcommand{\rhoF}{\rho_F}
\newcommand{\VD}{\mathcal{V}_2 ^D}
\newcommand{\rhot}{\tilde{\rho}}
\title{Incompressibility estimates for the Laughlin phase, part II}
\author[N. Rougerie]{Nicolas ROUGERIE}
\address{Universit\'e Grenoble 1 \& CNRS ,  LPMMC (UMR 5493), B.P. 166, F-38 042 Grenoble, France}
\email{nicolas.rougerie@grenoble.cnrs.fr}
\author[J. Yngvason]{Jakob Yngvason}
\address{Fakult\"at f\"ur Physik, Universit{\"a}t Wien, Boltzmanngasse 5, 1090 Vienna, Austria \&
Erwin Schr{\"o}dinger Institute for Mathematical Physics, Boltzmanngasse 9, 1090 Vienna, Austria.}
\email{jakob.yngvason@univie.ac.at}
\date{November 3, 2014}
\begin{document} 
\maketitle

\begin{abstract}
We consider fractional quantum Hall states built on Laughlin's original $N$-body wave-functions, i.e.,  they are of the form holomorphic $\times$ gaussian and vanish when two particles come close, with a given polynomial rate. Such states appear naturally when looking for the ground state of 2D particles in strong magnetic fields, interacting via repulsive forces and subject to an external potential due to trapping and/or disorder. We prove that all functions in this class satisfy a universal local density upper bound, in a suitable weak sense. Such bounds are useful to investigate the response of fractional quantum Hall phases to variations of the external potential. Contrary to our previous results for a restricted class of wave-functions, the bound we obtain here is not optimal, but it does not require any additional assumptions on the wave-function, besides analyticity and symmetry of the pre-factor modifying the Laughlin function.     
\end{abstract}

\tableofcontents

\section{Introduction}\label{sec:intro}

\subsection{Background}\label{sec:motiv} The fractional quantum Hall effect (FQHE)~\cite{StoTsuGos-99,Girvin-04} is a signature of the transition to a strongly correlated quantum  fluid in the ground state of a 2D electron gas submitted to a strong perpendicular magnetic field.  The best known of these states is described by Laughlin's wave function~\cite{Laughlin-83}
\begin{equation}\label{eq:intro Laughlin}
\PsiLau (z_1,\ldots,z_N) = \cLau \prod_{1\leq i<j \leq N} (z_i-z_j) ^{\ell}  e ^{- \sum_{j=1} ^N |z_j| ^2 / 2}. 
\end{equation}
Here, $z_1,\ldots,z_N$ are the coordinates, identified with complex numbers, of $N$ quantum particles in 2D. Since electrons are fermions, the positive integer $\ell$ should be an odd number for symmetry reasons. If $\ell=1$ the function \eqref{eq:intro Laughlin} is simply a Slater determinant, so for interacting fermions $\ell\geq 3$ is the important case. We shall also consider the case where $\ell$ is even so the wave function describes bosonic particles; this case is in particular relevant for ultra-cold Bose gases with short range repulsive interactions  in rapidly rotating traps \cite{LewSei-09,RouSerYng-13b, RouSerYng-13}. 

The form of the  function \eqref{eq:intro Laughlin} is dictated by two requirements:
\begin{itemize}
\item The function should belong to the lowest Landau level of a magnetic Laplacian, whence the form ``holomorphic $\times$ gaussian''. We have chosen units so that the width of the gaussian is fixed, i.e. does not depend explicitly on the magnetic field strength or the rotation speed.
\item The particles want to avoid one another in order to reduce as much as possible the interaction energy for a given repulsive potential. The holomorphic factor in~\eqref{eq:intro Laughlin} is thus chosen to vanish along the diagonals $z_i=z_j$ with a chosen polynomial rate~$\ell\in \N$. 
\end{itemize}
The constant $\cLau$ normalizes the function in $L^2 (\R^{2N})$ to make it a state vector that is a promising candidate for approximating the ground state of  2D quantum particles in a strong magnetic field, with strong repulsive interaction between the particles. More generally, in an external potential, natural candidates live in the space 
\begin{equation}\label{eq:Ker}
\mathcal L_\ell^N:= \left\{  \Psi\in L^2(\R ^{2N}):\  \Psi (z_1,\ldots,z_N) = \PsiLau (z_1,\ldots,z_N) F (z_1,\ldots,z_N), \: F \in \BargN \right\}. 
\end{equation}
Here $\BargN$ denotes the $N$-body bosonic Bargmann space 
\begin{equation}\label{eq:BargN}
\BargN :=  \left\{ F \mbox{ holomorphic and symmetric } | \:  F(z_1,\ldots,z_N) e^{- \sum_{j=1} ^N |z_j| ^2 /2  } \in L ^2 (\R ^{2N})\right\}.
\end{equation} 
Note that the symmetry constraint on $F$ implies that $\mathcal{L}_\ell ^N$ is an appropriate function space for fermions (resp. bosons) if $\ell$ is odd (resp. even). We thus treat both type of particles in the same way and are not able to exploit the effect of statistics beyond that of fixing the parity of $\ell$.

The functions in $\mathcal L_\ell^N$ are annihilated by two-body contact interactions \cite{TruKiv-85,PapBer-01}, and for $\ell = 2$ this space is exactly the kernel $\Ker$ of the contact interaction Hamiltonian
$$ \intN := \sum_{1\leq i < j \leq N} \delta_{ij}$$
acting on the $N$-body bosonic lowest Landau level by the prescription 
$$ \delta_{i,j} \left(F(z_1,\ldots,z_N) e^{- \sum_{j=1} ^N |z_j| ^2 /2} \right)= F\left(z_1,\ldots,\frac{z_i+z_j}{2},\ldots,\frac{z_i+z_j}{2},\ldots,z_N\right) e^{- \sum_{j=1} ^N |z_j| ^2 /2}. $$
We refer to our previous papers~\cite{RouSerYng-13,RouSerYng-13b} and in particular~\cite{RouYng-14} for more background and references. In particular, our notion of  ``incompressibility estimates''  is 
explained in~\cite{RouYng-14} and we recall it briefly in the next two paragraphs.

The variational space $\mathcal L_\ell^N$ is of course huge and it is of importance in practice to be able to determine which state will emerge as a natural candidate for the ground state in a given situation. All states in $\mathcal L_\ell^N$ have by definition the same magnetic kinetic energy, and since they all vanish with (at least) the same rate $|z_i-z_j| ^\ell$ along the diagonals of configuration space it is natural to expect that they will also have roughly the same (small) interaction energy for a wide class of interaction potentials. The emergence of one particular state as a candidate for a ground state must thus be a consequence of effects so far neglected: in quantum Hall systems\footnote{And also in the rotating Bose gases suggested to mimic them, see~\cite{RouYng-14} and references therein.}, these come mostly from the energetic contribution of external fields, due to trapping/and or disorder in a given sample. In other words, it seems reasonable 
that the emergence of a particular state should follow from the minimization of some reduced energy functional  on  $\mathcal L_\ell^N$. 

To be able to tackle such a problem, it is useful to know what properties (normalized) wave functions of $\mathcal L_\ell^N$ have in common. One such property should be an incompressibility, or rigidity, bound: there is good reason to believe that a $\Psi \in \mathcal L_\ell^N$ with $\norm{\Psi}_{L^2} = 1$ satisfies
\begin{equation}\label{eq:incomp formel}
\rhoP \leq \frac{1}{\pi \ell N},
\end{equation}
at least in a suitable weak sense, where
\begin{equation}\label{eq:intro density}
\rhoP (z):= \int_{\R ^{2(N-1)}} |\Psi (z,z_2,\ldots,z_N)| ^2 dz_2 \ldots dz_N 
\end{equation}
is the one-body probability density of $\Psi$. Note that~\eqref{eq:incomp formel} is in fact optimal in the sense that it is saturated for certain  choices of $\Psi$, in particular the Laughlin functions themselves, in the large $N$ limit, see~\cite{RouSerYng-13b}. Note, however, that \eqref{eq:incomp formel} is not expected to hold point-wise for finite $N$ \cite{Ciftja-06}.

This paper is devoted to the proof of estimates in this spirit. It is a sequel to~\cite{RouYng-14} where we introduced a suitable weak formulation of~\eqref{eq:incomp formel} and proved that it is satisfied for $\Psi$ of the particular form 
$$ \Psi (z_1,\ldots,z_N) = \PsiLau (z_1,\ldots,z_N) F (z_1,\ldots,z_N), \:F \in \VD, \: \norm{\Psi}_{L^2} = 1,$$
where ($\deg$ denotes the degree of a polynomial) $F$ belongs to the set
\begin{multline}\label{eq:var set 2 D}
\VD = \Big\{ F \in \Barg ^N \, : \, \mbox{ there exist } (f_1,f_2) \in \Barg \times \Barg ^2, \deg (f_1) \leq D N, \deg(f_2) \leq D, 
\\  F (z_1,\ldots,z_N) = \prod_{j = 1} ^N f_1 (z_j)\prod_{1\leq i < j \leq N } f_2 (z_i,z_j) \Big\}.
\end{multline}
In ~\cite{RouYng-14} it is argued that the bound should hold more generally if the pre-factor $F$ is a symmetrized product of functions that depend only on a fixed number $n$ of variables with $n\ll N$.

The main result of the present paper is a (weak version of a) bound of the form~\eqref{eq:incomp formel} without {\it any}  a priori assumption on $F$ except analyticity and symmetry. We prove that, in a sense made precise in Theorem \ref{thm:main} below,
\begin{equation}\label{eq:incomp formel 2}
 \rhoP \leq \frac{4}{\pi \ell N} 
\end{equation}
for any $\Psi \in  \mathcal L_\ell^N$ with $\norm{\Psi}_{L^2} = 1$. Note that the constant on the right-hand side of \eqref{eq:incomp formel 2} is four times the one in ~\eqref{eq:incomp formel}. 
\medskip

In brief, we prove in the present paper  incompressibility bounds for {\it all\/} functions in $ \mathcal L_\ell^N$ but with a worse constant than the (presumably optimal) one that was derived for a restricted family of wave functions in ~\cite{RouYng-14}. This difference is due to the different methods employed. It remains an open problem to bridge the gap between~\cite{RouYng-14} and the present paper by proving the optimal bound without a priori assumptions on $F$.

\subsection{Main result}\label{sec:resul}


As discussed at length in~\cite{RouYng-14}, our incompressibility estimates are formulated in terms of the minimum energy that one can achieve amongst states in $\mathcal L_\ell^N$ feeling an external potential which lives on the scale of the Laughlin state, i.e. over lengths of order $\sqrt{N}.$ This amounts to considering the energy functional 
\begin{equation}\label{eq:scale ener}
\E_N [\Psi] = (N-1) \intR V\left(\xbf\right) \rhoP \left(\sqrt{N-1} \: \xbf \right) 
\end{equation}
where $V$ is a given trapping potential, $\Psi \in \mathcal L_\ell^N$ and $\rhoP$ is the corresponding one-particle probability density~\eqref{eq:intro density}. Note the choice of normalization: we have set 
$$\intR (N-1)  \rhoP \left(\sqrt{N-1} \: \xbf \right) d\xbf = 1.$$
We shall be concerned with the large $N$ behavior of the ground state energy
\begin{equation}\label{eq:energy Nn}
E(N) := \inf \left\{ \E_N [\Psi_F], \; \Psi_F \in \mathcal L^N_\ell,\; \norm{\Psi}_{L^2} = 1 \right\}.
\end{equation} 
The following is our main result:

\begin{theorem}[\textbf{Unconditional incompressibility for $\mathcal{L}^N _\ell$}]\label{thm:main}\mbox{}\\
Let $V\in C ^{2} (\R^2)$ be increasing at infinity in the sense that
\begin{equation}\label{eq:increase V}
\min_{|x|\geq R} V(\xbf)\to \infty\quad\text{for}\quad R\to\infty.
\end{equation}
Define the corresponding {\it ``bathtub energy''} (cf. \cite[Theorem 1.14]{LieLos-01}) by
\begin{equation}\label{eq:bath tub}
E_V \left(\ell / 4 \right):= \inf\left\{ \intR V \rho \: \Big| \: \rho \in L^1 (\R ^2), \: 0 \leq \rho \leq \frac{4}{\pi \ell},\ \intR\rho=1 \right\}.
\end{equation}
Then 
\begin{equation}\label{eq:main incomp}
\liminf_{N\to \infty} E(N) \geq E_V\left( \ell / 4 \right).  
\end{equation}
\end{theorem}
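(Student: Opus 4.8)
The plan is to derive the energy bound from a \emph{weak} version of the pointwise density estimate \eqref{eq:incomp formel 2}, combined with the bathtub principle. First I would reduce the problem to the following weak incompressibility statement: if $\tilde\rho_N(\mathbf{x}) := (N-1)\rhoP(\sqrt{N-1}\,\mathbf{x})$ denotes the rescaled one-body density of a normalized $\Psi\in\mathcal L_\ell^N$, then every weak-$*$ limit point $\rho_\infty$ of a sequence $(\tilde\rho_N)$ satisfies $0\le\rho_\infty\le 4/(\pi\ell)$ almost everywhere. Equivalently, and more convenient to prove, I would show that for every disk $D\subset\RR$,
\[
\limsup_{N\to\infty}\int_D \tilde\rho_N \le \frac{4}{\pi\ell}\,|D|,
\]
since the pointwise bound on $\rho_\infty$ follows from the disk bound by Lebesgue differentiation.

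Granting this, the theorem follows by a soft compactness argument. Taking a subsequence of normalized $\Psi_N$ with $\E_N[\Psi_N]\to\liminf_N E(N)$, the growth hypothesis \eqref{eq:increase V} forces tightness of the probability densities $\tilde\rho_N$ (sending mass to infinity would cost an energy tending to $+\infty$, while the Laughlin state itself provides a competitor of bounded energy), so along a further subsequence $\tilde\rho_N\wto\rho_\infty$ with $\int_{\RR}\rho_\infty=1$. Since $V$ is continuous and bounded below, weak lower semicontinuity gives $\int V\rho_\infty\le\liminf\int V\tilde\rho_N=\liminf_N E(N)$; and since $\rho_\infty$ is admissible in \eqref{eq:bath tub} (a probability density bounded by $4/(\pi\ell)$), its energy is at least $E_V(\ell/4)$. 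Combining the two inequalities yields \eqref{eq:main incomp}.

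The heart of the matter, and the main obstacle, is therefore the disk bound, which must hold with \emph{no} assumption on $F$ beyond analyticity and symmetry. My approach would exploit the one-variable structure of $\Psi$: freezing $z_2,\dots,z_N$, the map $z_1\mapsto\Psi(z_1,\dots,z_N)$ is of the form $g(z_1)\,e^{-|z_1|^2/2}$ with $g$ entire and vanishing to order at least $\ell$ at each $z_j$, $j\ge2$. Thus a configuration placing $k$ of the particles inside a disk $D$ forces $g$ to carry at least $\ell k$ zeros in $D$, and a Gaussian-analytic ``zeros repel mass'' estimate bounds the conditional mass $\int_D|g|^2e^{-|z_1|^2}$ by a quantity decreasing in $\ell k$ (the model case $g=z^{n}$, whose density $|z|^{2n}e^{-|z|^2}$ peaks at radius $\sqrt n$, shows why zeros expel mass from small disks). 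Integrating over the remaining variables and using the exchange symmetry of $\rhoP$ produces a self-consistent inequality for the expected particle number $n_D:=N\int_D\rhoP$, which I would close to obtain $n_D\le \tfrac{4}{\pi\ell}\,|D|\,(1+o(1))$ at the macroscopic scale.

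The delicate points I anticipate are: (i) making the ``zeros imply small local mass'' estimate quantitative and uniform, with a constant sharp enough to yield precisely $4/(\pi\ell)$ rather than something larger; (ii) controlling fluctuations of the particle number in $D$, since the argument naturally bounds an average over configurations and some Chebyshev-type concentration is needed to pass from ``typically many particles in $D$'' to a deterministic density bound; and (iii) treating the interface between the macroscopic disk and the microscopic scale $\sqrt N$, together with the error from replacing the full $\ell(N-1)$ zeros by those genuinely inside $D$. The factor $4$ — rather than the optimal $1$ of \cite{RouYng-14} — is expected to be an artifact of the crudeness of step (i) and of the self-consistent closure, which is the price paid for dispensing with all structural assumptions on $F$.
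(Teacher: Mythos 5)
Your soft reduction (tightness from the growth of $V$, weak-$*$ compactness along a minimizing sequence, lower semicontinuity, and the bathtub principle) is correct as far as it goes, but it merely shifts the entire content of the theorem onto your unproven ``disk bound'', and that is where the proposal has a genuine gap. The lemma you need --- if the conditional LLL function $z_1\mapsto g(z_1)e^{-|z_1|^2/2}$ has at least $k$ zeros in a disk $D$, then its relative mass in $D$ is at most of order $|D|/(\pi k)$ --- is precisely the hard part, and nothing in the proposal proves it. The obstruction is the unknown entire factor: writing $g(z)=\prod_{j}(z-a_j)\,h(z)$ with $a_j\in D$, the factor $h$ (which contains $F$ and the influence of all particles outside $D$) is arbitrary, and log-subharmonicity of $|h|$ only yields sub-mean-value inequalities. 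Pushing these through a comparison between $D$ and a surrounding annulus does produce an exclusion estimate of the desired shape, but only for disks of radius $r\leq c\sqrt{k}$ with a small absolute constant $c$, which translates into a density bound with a constant far larger than $4/(\pi\ell)$. Since the theorem asserts that specific constant, and since your steps (i)--(iii) (sharp quantitative exclusion, concentration of the particle number in $D$, matching of microscopic and macroscopic scales) are explicitly left open, the proposal does not prove the statement; at best it outlines a route to a bound $C/(\pi\ell)$ with an uncontrolled constant $C$, and there is no mechanism in it that pins down the value $4$.

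For comparison, the paper obtains the constant $4$ by a different mechanism in which it has an explicit geometric origin. Via the plasma analogy, $|\Psi_F|^2$ is a Gibbs measure at temperature $T=1/N$ for the classical Hamiltonian \eqref{eq:class Hamil}, in which $-2\log|F|$ enters only as a term superharmonic in each variable. A Newton's-theorem argument (Proposition~\ref{pro:min dist}) shows that in any minimizing configuration the points are separated by at least $L_0=\sqrt{\ell/(N-1)}$ (up to an $\eps$-correction): a point closer than $L_0$ to another can strictly lower its energy by moving to the boundary of the disk of radius $L_0$, and neither the superharmonic terms nor the external field can prevent this. Disjoint balls of radius $L_0/2$ around the points then give a regularized empirical density bounded by $\frac{4}{\pi\ell}\,\frac{N-1}{N}$ --- this is exactly where the $4$ comes from. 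Finally, free-energy upper and lower bounds together with a Feynman--Hellmann perturbation transfer the ground-state bound to the Gibbs state, i.e.\ to $\rhoP$ itself, and the truncation $V_B$ of $V$ plays the role of your compactness argument. To salvage your approach you would have to either prove your exclusion lemma with the sharp constant (which, as you yourself note, is of a difficulty comparable to a crystallization-type problem) or accept a worse, unspecified constant, in which case the theorem as stated is not what you have proved.
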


This is a weak version of~\eqref{eq:incomp formel 2} in the sense that a universal lower bound to the energy $\E_N [\Psi_F]$ can be computed just as if~\eqref{eq:incomp formel 2} would hold, for a large class of one-body potentials. The notation $E_V\left( \ell / 4 \right)$ corresponds to the convention in~\cite[Theorem 2.1]{RouYng-14}. We \emph{conjecture} that the lower bound~\eqref{eq:main incomp} can be improved to  
\begin{equation}\label{eq:conj incomp}
\liminf_{N\to \infty} E(N) \geq E_V\left(\ell \right)  
\end{equation}
but this is out of reach with the method we use here. In~\cite[Theorem 2.1]{RouYng-14} we proved a conditional version of such a lower bound using a different method, i.e. we showed that \eqref{eq:conj incomp} holds under the a priori assumption (reasonable but not rigorously justified) that it suffices to consider correlation factors $F$ belonging to~\eqref{eq:var set 2 D}. Note that $E_V(\ell)$ is clearly an increasing function of $\ell$. For the homogeneous potential $V(x) = |x|^s$, it is easy to see that 
$$ E_V (\ell) = \frac{2}{s+2} \ell ^{s/2}.$$

Another natural conjecture is that the minimum energy $E(N)$ can always be achieved in the \emph{Laughlin phase} of functions of the form 
$$\Psi (z_1,\ldots,z_N) = c_{f_1} \PsiLau (z_1,\ldots,z_N) \prod_{j=1} ^N f_1 (z_j),\quad \norm{\Psi}_{L^2} = 1,$$
that is functions that do not contain any correlation factors beside the Laughlin state itself.  In view of~\cite[Corollary 2.3]{RouYng-14}, this would follow at least for some special radial potentials from the optimal bound~\eqref{eq:conj incomp}. This conjecture means informally that it is never favorable to leave the Laughlin phase by adding more correlations, whatever the one-body potential.

\subsection{Method of proof: The plasma analogy}\label{sec:method}

As in our previous works~\cite{RouSerYng-13b,RouSerYng-13,RouYng-14}, the starting point is Laughlin's plasma analogy, which is reminiscent of the log-gas analogy in random matrix theory~\cite{AndGuiZei-10,Dyson-62a,Forrester-10,Ginibre-65,Mehta-04}. The crucial observation~\cite{Laughlin-83,Laughlin-87} is that the absolute square of the wave-function~\eqref{eq:intro Laughlin} can be regarded as the Gibbs state of a 2D Coulomb gas (one-component plasma). Following our approach in~\cite{RouYng-14} we generalize this idea to map  any state of $\mathcal{L}_\ell ^N$ to a Gibbs state of a classical Hamiltonian.

Let us consider a state 
\begin{equation}\label{eq:start state}
\Psi_F = c_F \PsiLau (z_1,\ldots,z_N) F(z_1,\ldots,z_N) 
\end{equation}
where $\PsiLau$ is the Laughlin state \eqref{eq:intro Laughlin}, $F$ is holomorphic and symmetric and $c_F$ is a normalization factor. The idea is simply to write 
$$ |\Psi_F| ^2 = c_F ^2 \exp\left( - 2\log |\PsiLau| -2 \log |F| \right)$$
and interpret $2\log |\PsiLau| + 2\log |F|$ as a classical Hamilton function. One of the reasons why this is an effective procedure is that we can take advantage of the good scaling properties of $|\PsiLau| ^2$ to first change variables and obtain a classical Gibbs states \emph{with mean-field two-body interactions} and \emph{small effective temperature}. Specifically, we define
\begin{equation}\label{eq:Gibbs state}
\mu_N (Z) := (N-1)^{N} \left| \Psi_F \left(\sqrt{N-1} \: Z\right) \right| ^2 = \frac{1}{\ZN} \exp \left(-\frac{1}{T} H_N (Z) \right)
\end{equation}
where $\ZN$ ensures normalization of $\mu_N$ in $L^1(\R ^{2N})$,
$$T=\frac{1}{N},$$
and the classical Hamiltonian is of the form
\begin{equation}\label{eq:class Hamil}
H_N (Z) = \sum_{j=1} ^N |z_j| ^2 + \frac{2\ell}{N-1} \sum_{1 \leq i<j \leq N } w(z_i-z_j) + \frac{1}{N-1} W(Z).
\end{equation}
We have here written
\begin{equation}\label{eq:coul pot}
w(z) := - \log |z| 
\end{equation}
for the 2D Coulomb kernel and defined
\begin{equation}\label{eq:weird pot}
W(Z) := - 2 \log \left| F\left( \sqrt{N-1} \: Z  \right) \right|. 
\end{equation}
The $n$-th marginal of $\mu_N$ is defined as
$$ \mu_N ^{(n)}(z_1,\ldots,z_n):= \int_{\R ^{2{N-n}}} \mu_N (z_1,\ldots,z_N)dz_{n+1}\ldots dz_N.$$
We are ultimately interested in an incompressibility bound of the form (this is~\eqref{eq:incomp formel} after changing length and density units as in~\eqref{eq:Gibbs state})
\begin{equation}\label{eq:formal incomp}
\mu_N ^{(1)} \lesssim \frac{{\rm const.}}{\pi \ell} 
\end{equation}
in an appropriate weak sense, independently of the details of $W$. 

The function \eqref{eq:weird pot} can be rather intricate and represents in general a genuine $N$-body interaction term of the Hamiltonian \eqref{eq:class Hamil}. The only thing we know a priori about $W$ is that it is \emph{superharmonic in each of its variables}:
\begin{equation}\label{eq:subharm}
-\Delta_{z_j} W \geq 0 \quad \forall j=1\ldots N 
\end{equation}
which follows from the fact that $F$ is holomorphic. As explained in~\cite{RouYng-14}, under the additional assumption that $F$ belongs to $\VD$, it is possible to regard $W$ as a few-body interaction in a mean-field like scaling. Bounds of the form~\eqref{eq:formal incomp} where in~\cite{RouYng-14} shown to follow from a mean-field approximation for $\mu_N$: We first justified that an ansatz of the form
\begin{equation}\label{eq:MF ansatz}
\mu_N \approx \rho ^{\otimes N}, \quad \rho \in L ^1 (\R ^2) 
\end{equation}
is an effective ansatz for deriving  density bounds on $\mu_N ^{(1)}$.  It then followed from superharmonicity of $W$ that the appropriate $\rho$ must satisfy the desired incompressibility bound~\eqref{eq:formal incomp}, with the optimal constant 1. The difficult part in this approach was the justification of the ansatz~\eqref{eq:MF ansatz}. In the general case there does not seem to be any reason why~\eqref{eq:MF ansatz} should be a good approximation, and a new strategy is called for. 

There is, in fact, another way to bound the one-particle density in ground states of Coulomb systems without using any mean-field approximation. The argument is due to Lieb (unpublished), and variants thereof have been used recently in~\cite{PetSer-14,RotSer-13,RouSer-13}. It is based solely on some properties of the Coulomb kernel and general superharmonicity arguments, so that it can be adapted to Hamiltonians of the form~\eqref{eq:class Hamil}. Our strategy to prove Theorem~\ref{thm:main} is then the following:
\begin{itemize}
\item We adapt Lieb's argument to obtain a bound on the {\it minimal separation of points} in the ground state configurations
. This implies a local density upper bound at the level of the ground state,  but not yet the Gibbs state \eqref{eq:Gibbs state}.
\item We exploit the fact that~\eqref{eq:Gibbs state} is a Gibbs state for $H_N$ with small temperature $T \to 0$ in the limit $N\to \infty$. It  is thus reasonable to expect the density bound on the ground state to also apply to the Gibbs state.
\item More precisely, appropriate upper and lower bounds to the free-energy $- T \log \ZN$ confirm that it is close to the ground state energy of $H_N$ in the limit $T\to 0$. Applying these bounds to a suitably perturbed Hamiltonian gives the desired estimates on the density by a Feynman-Hellmann type argument.
\end{itemize}

This method has the merit of yielding weak density upper bounds, under the {\it sole} assumption that $F$ is holomorphic and symmetric. It can, however, not give the expected optimal bound~\eqref{eq:conj incomp} since it is based on a minimal distance estimate for ground state configurations: Obtaining the optimal constant with this method would be basically like computing the minimal distance between points in a Coulomb system, i.e., solving the crystallization problem for repulsive 2D Coulomb systems, a notoriously hard question (see~\cite{PetSer-14,SanSer-12b,RotSer-13,RouSer-13} for recent progress).

\medskip

The rest of the paper is devoted to the proof of Theorem~\ref{thm:main}. Section~\ref{sec:ground state} contains the analysis of the ground state of the classical Hamiltonian~\eqref{eq:class Hamil}. Adaptations to Gibbs states at small temperature are given in Section~\ref{sec:small T lim}. We come back to the original problem and conclude the proof in Section~\ref{sec:concl proof}.

\medskip

\noindent\textbf{Acknowledgments.} N.R. thanks the \textit{Erwin Schr\"odinger Institute}, Vienna, for its hospitality.  We received financial support from the ANR (project Mathostaq, ANR-13-JS01-0005-01) and the Austrian Science Fund (FWF) under project P~22929-N16.

\section{Estimates for ground state configurations}\label{sec:ground state}

In this section we discuss density bounds for the ground state configurations of $H_N$, i.e., for configurations of points $(z_1,\ldots,z_N)=Z_N\in \R^{2N}$ minimizing the function~\eqref{eq:class Hamil}. These density bounds will be derived from bounds on the minimal distance $\min_{i\neq j} \dist(z_i,z_j)$ as we explain first in Section~\ref{sec:incomp bound}. The lower bound on the minimal distance is then  proved in Section~\ref{sec:sep points}. 

\subsection{Density Upper bounds}\label{sec:incomp bound}
Since we shall later use a Feynman-Hellmann argument, we need to discuss, besides $H_N$, the perturbed Hamiltonian
\begin{equation}\label{eq:pert Hamil}
H_N ^{\eps} (Z_N) = \sum_{j=1} ^N \left( |z_j| ^2 + \eps U(z_j)\right) + \frac{2\ell}{N-1} \sum_{1 \leq i<j \leq N } w(z_i-z_j) + \frac{1}{N-1} W(Z)
\end{equation}
where $U\in C^2(\R ^2)$ is a uniformly bounded function with uniformly bounded derivatives up to second order. To understand the main point  of the argumentation, it is sufficient to think of the case $\eps = 0$, however. We shall prove the following, which is reminiscent of previous results proved in~\cite{PetSer-14,RotSer-13,RouSer-13} in the case $W\equiv 0$.

\begin{proposition}[\textbf{Separation of points in a ground state configuration}]\label{pro:min dist}\mbox{}\\
Let $Z_N ^\eps  = (z_1 ^\eps,\ldots,z_N ^\eps)$ be a ground state configuration for $H_N ^{\eps}$. Then, for $\eps$ small enough,
\begin{equation}\label{eq:min dist}
\min_{i\neq j} \dist (z_i ^\eps,z_j ^\eps) \geq \sqrt{\frac{\ell}{N-1}}\left(1 - 4 \eps^{1/2}  \norm{\Delta U}_{L ^{\infty}}^{1/2} \right). 
\end{equation}
\end{proposition}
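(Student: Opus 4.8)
The plan is to adapt the separation-of-points argument of Lieb, in the form used for pure Coulomb systems in \cite{PetSer-14,RotSer-13,RouSer-13}, the key being that it relies only on superharmonicity and therefore tolerates the a priori uncontrolled $N$-body term in \eqref{eq:pert Hamil}. First I would fix a ground state configuration $Z_N^\eps$ and single out a pair $(z_1^\eps,z_2^\eps)$ realizing the minimal distance $r:=\min_{i\neq j}\dist(z_i^\eps,z_j^\eps)$. Freezing every particle but the first, optimality of $Z_N^\eps$ forces $z_1^\eps$ to be a global minimizer over $\R^2$ of the effective one-body potential
\begin{equation*}
\Phi(z)=|z|^2+\eps U(z)+\frac{2\ell}{N-1}\sum_{j\geq 2}w(z-z_j^\eps)+\frac{1}{N-1}W(z,z_2^\eps,\ldots,z_N^\eps),
\end{equation*}
so that $v:=\Phi-\Phi(z_1^\eps)\geq 0$ with $v(z_1^\eps)=0$.

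Next I would compute the distributional Laplacian. Since $\Delta|z|^2=4$ and $-\Delta w=2\pi\delta_0$,
\begin{equation*}
\Delta\Phi=4+\eps\Delta U-\frac{4\pi\ell}{N-1}\sum_{j\geq 2}\delta_{z_j^\eps}+\frac{1}{N-1}\Delta_z W .
\end{equation*}
The decisive structural observation is that the Coulomb Diracs and, by superharmonicity \eqref{eq:subharm}, the term $\Delta_z W$ are both nonpositive, while $\eps\Delta U$ is controlled by $\eps\norm{\Delta U}_{L^\infty}$. Thus $W$, although a genuine and otherwise intractable $N$-body object, enters only with a favorable sign; this is precisely why no assumption on the prefactor beyond holomorphy (hence superharmonicity) is required.

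I would then run the comparison on disks $B(z_1^\eps,\rho)$ through the spherical averages $\bar v(\rho):=\frac{1}{2\pi\rho}\int_{\partial B(z_1^\eps,\rho)}v\,d\sigma$, which are nonnegative because $z_1^\eps$ is a global minimizer. Using $\bar v'(\rho)=\frac{1}{2\pi\rho}\int_{B(z_1^\eps,\rho)}\Delta\Phi$, the divergence theorem, and the sign information above yields $\bar v'(\rho)\leq \frac{(4+\eps\norm{\Delta U}_{L^\infty})\rho}{2}-\frac{2\ell}{N-1}\frac{n(\rho)}{\rho}$, where $n(\rho)$ counts the other points inside the disk. Integrating from $0$ (where $\bar v=0$), and using that the hole around $z_1^\eps$ gives $n\equiv 0$ below $r$ and $n\geq 1$ above (the neighbor $z_2^\eps$), the constraint $\bar v(\rho)\geq 0$ becomes
\begin{equation*}
\frac{2\ell}{N-1}\log\frac{\rho}{r}\leq \frac{4+\eps\norm{\Delta U}_{L^\infty}}{4}\,\rho^2,\qquad \rho>r .
\end{equation*}
Optimizing the free radius $\rho$ (the binding scale being $\rho\sim\sqrt{\ell/(N-1)}$) then produces a lower bound on $r$ of the announced form \eqref{eq:min dist}, the perturbation $\eps U$ entering through the curvature $4+\eps\norm{\Delta U}_{L^\infty}$ of the comparison paraboloid and generating the $O(\eps^{1/2})$ correction for $\eps$ small.

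The main obstacle is conceptual rather than computational: it is the presence of the true $N$-body term $W$, absent in \cite{PetSer-14,RotSer-13,RouSer-13}. The resolution is exactly that superharmonicity is the one property preserved under radial averaging, so $W$ never has to be estimated. I expect the genuinely delicate endpoint to be extracting the sharp constant $1$ in front of $\sqrt{\ell/(N-1)}$, together with the precise coefficient of the $\eps^{1/2}$ term: this sharp constant is what ultimately feeds, through a disjoint-disk packing estimate at the close of the argument, the factor $4$ in \eqref{eq:incomp formel 2}, so the optimization over $\rho$ and the bookkeeping of the perturbation must be carried out carefully rather than crudely.
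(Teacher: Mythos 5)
Your structural observations are all correct: at a minimizing configuration each $z_i^\eps$ is a global minimizer of the effective one-body potential obtained by freezing the other points, the distributional Laplacian of that potential is $4+\eps\Delta U-\frac{4\pi\ell}{N-1}\sum_{j\geq 2}\delta_{z_j^\eps}+\frac{1}{N-1}\Delta_z W$, and superharmonicity \eqref{eq:subharm} lets the intractable $N$-body term $W$ enter only with a favorable sign. The gap is in the final step: the constraint you arrive at,
\begin{equation*}
\frac{2\ell}{N-1}\log\frac{\rho}{r}\ \leq\ \frac{4+\eps\norm{\Delta U}_{L^\infty}}{4}\,\rho^2,\qquad \rho>r,
\end{equation*}
does \emph{not} "produce a lower bound of the announced form''. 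Writing $a=\frac{2\ell}{N-1}$ and $b=1+\frac{\eps}{4}\norm{\Delta U}_{L^\infty}$, the optimal admissible radius is $\rho^2=a/(2b)$ and the constraint yields only
\begin{equation*}
r\ \geq\ e^{-1/2}\,b^{-1/2}\sqrt{\frac{\ell}{N-1}},
\end{equation*}
i.e.\ the constant $e^{-1/2}\approx 0.61$ in place of $1$, with a correction $1-O(\eps)$ rather than $1-4\eps^{1/2}\norm{\Delta U}_{L^\infty}^{1/2}$. This loss is intrinsic to the spherical-average route (you have already used the constraint at its optimum, so no more careful bookkeeping can recover it), and it is not harmless: fed through the disk-packing argument of Corollary~\ref{cor:dens bound}, your separation bound would replace $4/(\pi\ell)$ by $4e/(\pi\ell)$ in \eqref{eq:incomp tilde} and hence weaken Theorem~\ref{thm:main}. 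The fact that your $\eps$-dependence comes out linear instead of $\eps^{1/2}$ is a symptom that you are running a genuinely different (and lossier) comparison than the one the statement is calibrated to.

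The paper reaches the constant $1$ by a displacement argument rather than a mean-value argument. Supposing two points satisfy $\dist(z_0,z)<L_0(1-\delta)$ with $L_0=\sqrt{\ell/(N-1)}$, one writes the energy, as a function of $z$ on $B(z_0,L_0)$, as $G(z)+R(z)+\eps\tilde U(z)$, where $G$ is the explicit radial potential of the point charge $\frac{2\ell}{N-1}$ at $z_0$ together with the \emph{exactly neutralizing} background $-\frac{2}{\pi}\one_{B(z_0,L_0)}$ (whose potential simultaneously renders $|z|^2+\eps(U-\tilde U)-P$ harmonic on the disk), and $R$ collects all remaining terms and is superharmonic on $B(z_0,L_0)$. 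By the minimum principle, $R$ attains its minimum over the closed disk at a boundary point $z'$, so moving $z\mapsto z'$ does not increase $R$; Newton's theorem gives $G(L_0)=0$, $G'(L_0)=0$ and $G$ strictly decreasing inside, whence $G(z)-G(z')\geq\frac12\delta^2L_0^2$, while $\eps\left|\tilde U(z)-\tilde U(z')\right|\leq 8\eps L_0^2\norm{\Delta U}_{L^\infty}$. For $\delta>4\eps^{1/2}\norm{\Delta U}_{L^\infty}^{1/2}$ the energy strictly drops, contradicting minimality. The exact neutralization radius and the second-order vanishing of $G$ at $L_0$ are precisely what produce both the sharp constant $1$ and the $\eps^{1/2}$ form of the correction; to salvage your approach you would need to exploit minimality pointwise (via the minimum principle) rather than on circular averages, which in effect brings you back to the paper's proof.
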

To see that this indeed provides an incompressibility bound of the form~\eqref{eq:formal incomp} at the level of the ground state, we state the 

\begin{corollary}[\textbf{Weak incompressibility bound for the ground state}]\label{cor:dens bound}\mbox{}\\
Let $Z_N ^\eps = (z_1 ^\eps,\ldots,z_N^\eps)$ be a ground state configuration for $H_N ^{\eps}$ and 
\begin{equation}\label{eq:ground density}
\rho_0  ^\eps (z) = \frac{1}{N} \sum_{j=1} ^N \delta(z-z_j^\eps)
\end{equation}
be the corresponding $1$-particle density (empirical measure). There exists a $\rhot_0 ^\eps \in L ^{\infty} (\R ^2)$ satisfying 
$$\int_{\R ^2} \rhot_0 ^\eps = 1$$
and 
\begin{equation}\label{eq:incomp tilde}
\rhot_0  ^\eps \leq \frac{4}{\pi \ell} \left(1 +8 \eps^{1/2} \norm{\Delta U}_{L ^{\infty}}^{1/2} \right) \left( 1 - N ^{-1} \right)
\end{equation}
for $\eps$ small enough, such that for any $ f \in C^1 (\R^2)$ we have
\begin{equation}\label{eq:incomp estimate}
\left| \int_{\R ^2} \left( \rho_0 ^\eps - \rhot_0 ^\eps \right)  f \right| \leq C N ^{-1/2} \norm{\nabla  f}_{L ^{\infty}}
\end{equation}
in the limit $N\to\infty$, with a constant  $C\sim \ell^{1/2}$.
\end{corollary}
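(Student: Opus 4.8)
The plan is to \emph{smear out} the empirical measure $\rho_0 ^\eps$ into an $L^\infty$ density using the separation estimate of Proposition~\ref{pro:min dist}, and then to control the resulting discrepancy in the Lipschitz-dual (Wasserstein) sense demanded by~\eqref{eq:incomp estimate}. Write
\[
r_N := \sqrt{\frac{\ell}{N-1}}\left(1 - 4\eps^{1/2}\norm{\Delta U}_{L^\infty}^{1/2}\right)
\]
for the lower bound on $\min_{i\neq j}\dist(z_i^\eps,z_j^\eps)$ furnished by Proposition~\ref{pro:min dist}. The first observation is that the disks $B_j := B(z_j^\eps, r_N/2)$, $j=1,\ldots,N$, are pairwise disjoint up to a null set, since any two centers are at distance at least $r_N$, which is the sum of the two radii.

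I would then define the regularized density by spreading each Dirac mass uniformly over its disk,
\[
\rhot_0 ^\eps := \frac{1}{N}\sum_{j=1} ^N \frac{1}{|B_j|}\,\one_{B_j} = \frac{4}{N\pi r_N ^2}\sum_{j=1} ^N \one_{B_j}.
\]
Because the $B_j$ are disjoint, this $\rhot_0 ^\eps$ lies in $L^\infty(\R^2)$, integrates to $\frac1N\sum_j 1 = 1$, and its $L^\infty$-norm equals the common value on each disk, namely
\[
\frac{4}{N\pi r_N ^2} = \frac{4}{\pi\ell}\,\bigl(1-N^{-1}\bigr)\bigl(1 - 4\eps^{1/2}\norm{\Delta U}_{L^\infty}^{1/2}\bigr)^{-2}.
\]
For $\eps$ small enough the scalar factor is controlled by $\bigl(1+8\eps^{1/2}\norm{\Delta U}_{L^\infty}^{1/2}\bigr)$, which yields~\eqref{eq:incomp tilde}.

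For the weak comparison~\eqref{eq:incomp estimate}, I would estimate each particle's contribution separately. Since $B_j$ is centered at $z_j^\eps$ and carries mass $1/N$ under both measures, for any $f\in C^1(\R^2)$,
\[
\left|\int_{\R^2}\bigl(\rho_0 ^\eps - \rhot_0 ^\eps\bigr)f\right| = \left|\frac1N\sum_{j=1} ^N\left(f(z_j^\eps) - \frac{1}{|B_j|}\int_{B_j}f\right)\right| \le \frac1N\sum_{j=1} ^N \norm{\nabla f}_{L^\infty}\,\frac{r_N}{2},
\]
where I used that every point of $B_j$ lies within distance $r_N/2$ of its center. The sum telescopes to $\tfrac12 r_N\,\norm{\nabla f}_{L^\infty}$, and inserting $r_N \le \sqrt{\ell/(N-1)}$ gives a bound of order $\ell^{1/2}N^{-1/2}\norm{\nabla f}_{L^\infty}$, which is precisely~\eqref{eq:incomp estimate} with $C\sim\ell^{1/2}$.

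The substantive ingredient here is entirely Proposition~\ref{pro:min dist}; once the minimal separation is known, this corollary is a clean deduction and I do not expect a genuine obstacle. The only points requiring a little care are that the smearing radius be taken as exactly half the minimal separation, so that the disks remain disjoint (keeping $\rhot_0 ^\eps$ well defined) while the density stays as small as the sharp value $4/(N\pi r_N^2)$ allows, and that the comparison step invoke only the first-order control of $f$, consistently with the assumption $f\in C^1$ rather than $C^2$.
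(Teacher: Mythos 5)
Your proposal is correct and is essentially the paper's own proof: both arguments smear the empirical measure over disjoint balls of radius equal to half the minimal separation provided by Proposition~\ref{pro:min dist}, read off the $L^\infty$ bound from disjointness of the balls, and obtain \eqref{eq:incomp estimate} because each Dirac mass is displaced by at most the ball radius, which is $O(\sqrt{\ell/N})$. One caveat, which you share verbatim with the paper: the inequality $(1-x)^{-2}\leq 1+2x$ with $x=4\eps^{1/2}\norm{\Delta U}_{L^\infty}^{1/2}$ is in fact false for every $x\in(0,3/2)$, so \eqref{eq:incomp tilde} really holds only with the constant $8$ enlarged (e.g.\ $12$ suffices once $x\leq 0.23$); this is immaterial downstream, where one lets $\eps\to 0$ anyway, but strictly speaking neither your construction nor the paper's satisfies \eqref{eq:incomp tilde} exactly as written.
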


\begin{proof}
The obvious choice is to define 
\begin{equation}\label{eq:defi rhot}
\rhot_0 ^\eps (z) = \frac{1}{N} \sum_{j=1} ^N \frac{1}{\pi L ^2}  \one_{z\in B(z_j ^\eps,L)}
\end{equation}
where $\one_{z\in B(z_j ^\eps,L)}$ is the characteristic function of the ball $B(z_j ^\eps,L)$ with center $z_j$ and radius $L$, with
$$ L = \frac{1}{2}\ \sqrt{\frac{\ell}{N-1}}\left(1 - 4 \eps^{1/2}  \norm{\Delta U}_{L ^{\infty}}^{1/2} \right). $$
Then~\eqref{eq:incomp tilde} holds because~\eqref{eq:min dist} ensures that the balls $B(z_j ^\eps,L)$ do not overlap, and~\eqref{eq:incomp estimate} is obvious, with $ C \sim \ell ^{1/2}$ since the radius of the ball is $O(\sqrt{\ell/N})$.
\end{proof}

\subsection{Separation of points in the ground state}\label{sec:sep points}

To prove Proposition 2.1
we use the fact that any point in a minimizing configuration must sit at the minimum of the potential generated by all the other points plus the external potential. To simplify the notation we drop the $\eps$ superscripts on the points $z_j  ^\eps$ in this proof.

\medskip

Define \beq L_0:=\sqrt{\frac \ell{N-1}},\eeq and, for $0<\delta<1$, 
\beq L_\delta:=L_0(1-\delta)\eeq
We prove Proposition 2.1 by showing that for $\delta=4\eps^{1/2} \Vert \Delta U\Vert_{L^\infty}^{1/2}$ and $\eps$ small enough so that $0<\delta<1$, the energy of any configuration $Z$ with two points, say $z_0$ and $z$, satisfying
\begin{equation}\label{eq:small dist}
\dist (z_0,z) < L_\delta 
\end{equation}
can be strictly lowered by moving $z$ to another point $z'$ farther from $z_0$, namely such that
$$\dist (z_0,z') = L_0>L_\delta.$$
To show this  we compute the energy difference between the two configurations\footnote{By symmetry of $H_N$ we may assume that $z_0$ and $z$ are the first two points in the labeling.}, $Z= \left( z_0, z, z_3,\ldots,z_N \right)$ and  $Z'=\left( z_0, z', z_3,\ldots,z_N \right)$: 
\begin{align*}
H_N (z_0,z,z_3,\ldots,z_N) &- H_N (z_0,z',z_3,\ldots,z_N) \\
&= |z| ^2 - |z'| ^2 + \frac{2 \ell}{N-1} w(z_0-z) - \frac{2 \ell}{N-1} w(z_0-z') 
\\&+ \frac{1}{N-1} W(Z) - \frac{1}{N-1} W (Z') + \frac{2\ell}{N-1} \sum_{j= 3} ^N \left( w (z_j-z) - w (z_j-z') \right)
\\&+ \eps \left( U(z) - U(z')\right).
\end{align*}
Considering the points $z_0,z_3,\ldots,z_N$ as fixed we write this as
\begin{equation}\label{eq:decomp}
 H_N (Z) - H_N (Z') = G(z) - G(z') + R(z) - R(z') + \eps \left(\tilde{U} (z) - \tilde{U} (z')\right) 
\end{equation}
where 
\begin{align*}
G(z) &= \frac{2 \ell}{N-1} w(z_0-z) + P (z)\\
P (z) &= \frac{2}{\pi}\log |\: . \:| \ast  \one_{z\in B (z_0,L_0)}\\
\tilde{U} (z) &= \frac{1}{2\pi} \log |\: . \:| \ast \left( (\Delta U) \one_{z\in B (z_0,L_0)} \right)
\end{align*} 
and 
 \beq\label{R} R (z) = \frac{2\ell}{N-1} \sum_{j= 3} ^N  w (z_j-z) + \frac{1}{N-1} W(z_0,z,z_3,\ldots,z_N) + |z| ^2 + \eps \left(U(z)-\tilde{U}(z)\right) - P (z).\eeq
We  first note that $z\mapsto R(z)$ is superharmonic on $B(z_0,L_0)$. Indeed,  the first two terms in \eqref{R} are by definition superharmonic everywhere (recall (1.17) and (1.20))
 and for the combination of the last three we use that $P$ is precisely tuned in order to have 
$$ - \Delta \left(|z| ^2 + \eps \left(U(z)-\tilde{U}(z)\right) - P (z)\right) = 0$$
in $B(z_0,L_0)$. Hence 
$$ -\Delta R (z) \geq 0 \mbox{ for } z\in B(z_0,L_0).$$
As a consequence $R$ reaches its minimum over $B(z_0,L_0)$ at some point $z'$ of the boundary of the disc:
\begin{equation}\label{eq:vari R} 
R(z) -R(z') \geq 0 
\end{equation}
with $\dist (z_0,z') = L_0$.

Next we note that 
\begin{align*}
 \left| \nabla \tilde{U} \right| &\leq  \frac{2}{\pi} |\: . \:| ^{-1} \ast \left( |\Delta U| \one_{z\in B (z_0,L_0)}\right)\\
 &\leq 4 \norm{\Delta U}_{L ^{\infty}} \int_{B(0,L_0)} \frac{1}{|y|} dy \leq 4\, L_0 \norm{\Delta U}_{L ^{\infty}}.
\end{align*}
Since $|z-z'|\leq 2 L_0$ we thus obtain
\begin{equation}\label{eq:vari Ut}
\eps \left|\tilde{U} (z) - \tilde{U} (z')\right| \leq 8\,\eps  L_0^2\norm{\Delta U}_{L^{\infty}} .
\end{equation}

It remains to estimate $G(z) - G(z')$. Since $G$ is radial around $z_0$ and monotonously decreasing in the radial variable on $B(z_0,L_0)$ (see Eq.\eqref{gprime} below) we can just translate the whole system to set $z_0 = 0$ and compute $G(L_{\delta}) - G(L_0)$. But $G$ is the potential generated by a positive point charge sitting at $z_0=0$ and a negative charge smeared over the ball $B(0,L_0)$, so
$$ G = -\log |\: . \:| \ast \left( \frac{2\ell}{N-1}\delta_{0} - \frac{2}{\pi} \one_{B (0,L_0)}\right).$$ 
An explicit computation using Newton's theorem shows that for $r\leq L_0$ 
$$ G(r) = 2 (L_0) ^2 \log L_0 - \frac{2\ell}{N-1} \log r + r ^2 -  (L_0) ^2$$
and 
\beq\label{gprime} G'(r) =- \frac{2\ell}{N-1} \frac 1r + 2r <0\eeq
 if $r<L_0$. Moreover, $G(L_0)=0$ and, using that $-\log(1-\delta)\geq \delta-\half \delta^{2}$, we have
 $$G(L_\delta)\geq\half\delta^2L_0^2$$
 Thus, 
$$G(z)-G(z')+\eps(\tilde{U} (z) - \tilde{U} (z'))\geq G(z)-G(z')-\eps |\tilde{U} (z) - \tilde{U} (z')| >0$$
provided
\beq\label{eq:choice delta}
\half\,\delta^2  L_0^2-8\,\eps L_0^2 \norm{\Delta U}_{L^{\infty}} >0, \quad\hbox{i.e.}\quad \delta>4\,\eps^{1/2}\,\Vert \Delta U\Vert_{L\infty}^{1/2}.
\eeq
In view of~\eqref{eq:decomp} and~\eqref{eq:vari R} we thus have 
$$ H_N (Z) > H_N (Z').$$
Hence \eqref{eq:small dist} with $\delta$ as in~\eqref{eq:choice delta} cannot hold for a minimizing configuration, so~\eqref{eq:min dist} must hold in such a configuration and the proof of Proposition~\ref{pro:min dist} is complete. \hfill\qed

\section{Applications to Gibbs states at small temperature}\label{sec:thermal state}

In this section we deduce incompressibility bounds for the Gibbs state~\eqref{eq:Gibbs state} at $T = N ^{-1}$ from the zero-temperature results of the previous section. The main step is to  bound from below the energy in a regular potential $U$ with bounded derivatives, which should be thought of as a truncation of the physical potential $V$ appearing in Theorem~\ref{thm:main}, see Section~\ref{sec:concl proof} below.

\begin{proposition}[\textbf{Energy lower bounds in truncated potentials}]\label{pro:incomp pre}\mbox{}\\
Let $\mu_N$ be defined as in~\eqref{eq:Gibbs state}, let $U \in C^2 (\R ^2)$ be such that $U,\Delta U \in L ^{\infty} (\R ^2)$. For any $N$ large enough and $\eps>0$ small enough there exists a probability density  $\rho \in L ^1 (\R ^2)$ satisfying
\begin{equation}\label{eq:bound MF thm}
\rho \leq \frac{4}{\pi \ell} \left(1 +8 \eps^{1/2} \norm{\Delta U}_{L ^{\infty}}^{1/2} \right) \left( 1 - N ^{-1} \right)
\end{equation}
such that, for a constant $C<\infty$, 
\begin{equation}\label{eq:incomp two}
\int_{\R ^2} U \mu_N ^{(1)} \geq  \int_{\R ^2} U  \rho - \frac{C}{\eps N}\left( 1 + \log N\right) - C N ^{-1/2}\norm{\nabla U}_{L ^{\infty}}.
\end{equation} 
\end{proposition}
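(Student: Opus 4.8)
The plan is to transfer the zero-temperature bound of Corollary~\ref{cor:dens bound} to the Gibbs state by a Feynman--Hellmann argument, exploiting that $T=N^{-1}\to0$. Set
$$\FN(\eps):=-T\log\ZN^\eps,\qquad \ZN^\eps:=\int_{\R^{2N}}\exp\!\Big(-\tfrac1T\HN^\eps(Z)\Big)\,dZ ,$$
so that $\FN(0)=-T\log\ZN$. Since $\partial_\eps\HN^\eps=\sum_jU(z_j)$, the function $\eps\mapsto\FN(\eps)$ is concave, with $\FN'(\eps)=N\intR U\,\mu_{N,\eps}^{(1)}$ where $\mu_N^\eps$ is the Gibbs state of $\HN^\eps$; in particular $\FN'(0)=N\intR U\,\mu_N^{(1)}$. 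Concavity then gives, for every $\eps>0$,
$$\intR U\,\mu_N^{(1)}=\tfrac1N\FN'(0)\ \geq\ \frac{\FN(\eps)-\FN(0)}{N\eps}.$$
Thus it suffices to bound $\FN(\eps)-\FN(0)$ from below, and the whole point is to show that each free energy sits within $O(1+\log N)$ of the ground state energy $E_N^\eps:=\min_Z\HN^\eps(Z)$.

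Granting the two estimates
$$\FN(\eps)\geq E_N^\eps-C(1+\log N),\qquad \FN(0)\leq E_N^0+C(1+\log N),$$
and writing $Z_N^\eps=(z_1^\eps,\dots,z_N^\eps)$ for a minimiser of $\HN^\eps$, the elementary inequality $E_N^0\leq\HN^0(Z_N^\eps)=E_N^\eps-\eps\sum_jU(z_j^\eps)$ yields $E_N^\eps-E_N^0\geq\eps\sum_jU(z_j^\eps)=\eps N\intR U\,\rho_0^\eps$. Combining,
$$\intR U\,\mu_N^{(1)}\geq\intR U\,\rho_0^\eps-\frac{C(1+\log N)}{\eps N}.$$
Finally Corollary~\ref{cor:dens bound} lets me replace the empirical measure $\rho_0^\eps$ by the regularised density $\rhot_0^\eps\in L^\infty$: by~\eqref{eq:incomp estimate} this costs $CN^{-1/2}\norm{\nabla U}_{L^\infty}$, and $\rho:=\rhot_0^\eps$ obeys exactly~\eqref{eq:bound MF thm} thanks to~\eqref{eq:incomp tilde}. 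This is the claim.

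The upper bound on $\FN(0)$ (a lower bound on $\ZN$) I would obtain by restricting the integral to a product $\prod_jB(z_j^0,r)$ of small balls about the ground configuration $Z_N^0$ and applying Jensen's inequality,
$$\ZN\geq(\pi r^2)^{N}\exp\!\Big(-\tfrac1T\,\langle\HN^0\rangle\Big),$$
where $\langle\,\cdot\,\rangle$ denotes the average over $\prod_jB(z_j^0,r)$. By the minimal-distance bound~\eqref{eq:min dist}, for $r<\tfrac12 L_0(1-\delta)$ the balls are disjoint and avoid the logarithmic singularities, so each $-\log|z_i-z_j|$ is superharmonic on $B_i\times B_j$ and its average is $\leq$ its value at $(z_i^0,z_j^0)$; the confinement contributes $\langle\sum_j|z_j|^2\rangle=\sum_j(|z_j^0|^2+r^2/2)$; and, crucially, since $W$ is superharmonic in each variable~\eqref{eq:subharm}, iterating the sub-mean-value property gives $\langle W\rangle\leq W(Z_N^0)$. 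Hence $\langle\HN^0\rangle\leq E_N^0+Nr^2/2$ and, with $T=N^{-1}$,
$$\FN(0)\leq E_N^0+\tfrac12Nr^2-\log(\pi r^2),$$
which on choosing $r\sim N^{-1/2}$ (respecting $r<\tfrac12 L_0(1-\delta)$) produces the correction $\log N+O(1)$. The same computation about $Z_N^\eps$ gives $\FN(\eps)\leq E_N^\eps+C(1+\log N)$.

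The main obstacle is the reverse bound $\FN(\eps)\geq E_N^\eps-C(1+\log N)$, i.e.\ an \emph{upper} bound on $\ZN^\eps$ uniform in the $N$-body potential $W$. Writing $\FN(\eps)=\int\HN^\eps\,d\mu_N^\eps-T\,S(\mu_N^\eps)\geq E_N^\eps-T\,S(\mu_N^\eps)$ reduces this to controlling the differential entropy $S(\mu_N^\eps)=-\int\mu_N^\eps\log\mu_N^\eps$, for which the natural device is the maximum-entropy (Gaussian) bound in terms of $\int_{\R^{2N}}|Z|^2\,d\mu_N^\eps$. The trouble is that this second moment is \emph{not} controlled uniformly in $W$: since $F$ lies in the Bargmann space~\eqref{eq:BargN}, $W=-2\log|F(\sqrt{N-1}\,\cdot)|$ may grow like $-(N-1)\sum_j|z_j|^2$ and so exactly cancel the harmonic confinement carried by the Laughlin factor, and no crude lower bound $\HN^\eps\geq\tfrac12\sum_j|z_j|^2-CN$ is available. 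The estimate must therefore be made robust to such spreading of the mass, for instance by comparing $\ZN^\eps$ with the precisely understood partition function of the pure Laughlin ($W\equiv0$) Coulomb gas and absorbing the factor $e^{-\frac{N}{N-1}W}=|F(\sqrt{N-1}\,\cdot)|^{2N/(N-1)}$ against the $L^2$-normalisation of $\Psi_F$ by Hölder's inequality. Keeping the resulting loss at the level $O(1+\log N)$ is where the real work lies, and since the whole scheme rests on the minimal-distance estimate~\eqref{eq:min dist} rather than on the sharp minimal separation, it is also what pins the method to the non-optimal constant $4/(\pi\ell)$.
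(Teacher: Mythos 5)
Your overall architecture coincides with the paper's: your concavity inequality $F_N^{\eps}\leq F_N^{0}+\eps N\intR U\,\mu_N^{(1)}$ is exactly the paper's variational step $F_N^{\eps}\leq\F_N^{\eps}[\mu_N]=F_N^{0}+\eps N\intR U\,\mu_N^{(1)}$ (obtained there without differentiating, simply by using $\mu_N$ as a trial measure); your comparison $E_N^\eps-E_N^0\geq\eps N\intR U\rho_0^\eps$ and the final appeal to Corollary~\ref{cor:dens bound}, costing $CN^{-1/2}\norm{\nabla U}_{L^\infty}$, are identical to the paper's; and your lower bound on $\ZN$ (smearing a ground-state configuration over small balls, Jensen, superharmonicity, entropy $\sim\log N$) is precisely the paper's trial-state computation \eqref{eq:trial state}--\eqref{eq:free ener up bound}. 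Two minor remarks there: disjointness of the balls is not needed, since $-\log|\cdot|$ and $W$ are superharmonic globally, so averaging over \emph{any} balls decreases them (your constraint $r<\tfrac12 L_0(1-\delta)$ would actually conflict with $r\sim N^{-1/2}$ when $\ell\leq 2$); and only an upper bound on $F_N^{0}$ is required, not on $F_N^{\eps}$.

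The genuine gap is the one you flag yourself: the lower bound $F_N^{\eps}\geq \min H_N^\eps-C(1+\log N)$ is never proved. Your diagnosis of why the naive route fails is correct --- a maximum-entropy bound needs a second moment of $\mu_N^\eps$, which is not available uniformly in $W$ --- but the H\"older/comparison-with-the-pure-Laughlin-plasma strategy you sketch is not carried out, and it is not clear it can be made to produce only an $O(1+\log N)$ loss. The device you are missing is cheaper: do not bound the entropy intrinsically, but use positivity of the relative entropy with respect to a \emph{fixed} Gaussian product reference $\nu^{\otimes N}$, $\nu=\pi^{-1}e^{-|z|^2}$. Indeed
\[
\int_{\R^{2N}}\mu_N^\eps\log\mu_N^\eps\;\geq\;\int_{\R^{2N}}\mu_N^\eps\log\nu^{\otimes N}\;=\;-N\log\pi-N\intR|z|^2(\mu_N^\eps)^{(1)},
\]
so that, after multiplication by $T=N^{-1}$, the entropy costs only $\log\pi+\intR|z|^2(\mu_N^\eps)^{(1)}$, and the quadratic term can be absorbed into the Hamiltonian:
\[
F_N^\eps\;\geq\;\min_{Z_N\in\R^{2N}}\Big\{H_N^\eps(Z_N)-N^{-1}\sum_{j=1}^N|z_j|^2\Big\}-\log\pi.
\]
This converts the finite-temperature problem into a zero-temperature one for a Hamiltonian of the same form with confinement $(1-N^{-1})|z|^2$, requiring no moment bound on the Gibbs measure at all. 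What remains is to show that this weakened minimum lies within $O(1)$ of $\min H_N^\eps$, i.e.\ that $\sum_j|z_j|^2\leq CN$ at a minimizing configuration of the brace; the paper asserts this with a reference to computations in~\cite[Section 3]{RouYng-14}. Your worry about $W$ nearly cancelling the harmonic confinement is exactly what makes that residual step delicate (for $F=\prod_j e^{az_j^2}$ with $a$ close to $1/2$ the weakened Hamiltonian can even fail to be bounded below), but the relative-entropy trick is what removes the obstruction that blocks your proposal, and without it your argument does not close.
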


Note the necessary adjustment of $\eps$ in applications: the smaller it is, the closer the density bound~\eqref{eq:bound MF thm} is from the desired one, but the larger the error term in~\eqref{eq:incomp two}.  We prove this result in the next Section~\ref{sec:small T lim} and deduce from it our main Theorem~\ref{thm:main} in Section~\ref{sec:concl proof}.

\subsection{Small temperature limit}\label{sec:small T lim}

The proof consists of upper and lower bounds to the free energy
\begin{equation}\label{eq:free ener}
F_N ^\eps := \inf\left\{\F_N ^{\eps} [\mu], \quad \mu\in \PP(\R ^{2N}) \right\}
\end{equation}
where the free energy functional on the space $\PP(\R ^{2N})$ of probability measures on $\mathbb R^2$ is 
\begin{equation}\label{eq:free ener func}
\F_N ^{\eps} [\mu] := \int_{\R ^{2N}}  H_N ^{\eps} (Z_N) \mu (Z_N) dZ_N + N ^{-1}\int_{\R ^{2N}} \mu \log \mu
\end{equation}
with the perturbed Hamiltonian~\eqref{eq:pert Hamil}. It is well-known that the infimum is achieved by the Gibbs measure 
\begin{equation}\label{eq:Gibbs pert}
\mu_N ^{\eps} (Z_N) = \frac{1}{\ZN ^{\eps}} \exp \left( -N    H_N ^{\eps} (Z_N) \right) \in \PP(\R ^{2N}) 
\end{equation}
where the partition function satisfies $-N ^{-1} \log \ZN ^{\eps} = F_N ^{\eps}.$ The perturbation by the one-body potential $\eps U$ of the original Hamiltonian~\eqref{eq:class Hamil} will allow us to deduce density bounds from free-energy estimates, in the spirit of the Feynman-Hellmann principle. 


\subsubsection*{Free energy upper bound} By the variational principle we have 
\begin{align*}
F_N ^ \eps \leq \F_N ^{\eps} [\mu_N] = F_N ^{0} + \eps N \int_{\R ^2} U \mu_N ^{(1)}.
\end{align*}
To obtain an upper bound on $F_N ^0$ we use a trial state which is a regularization, over a length scale $\eta$ to be later optimized over, of a ground state configuration $Z_N ^0 = (z_1 ^0,\ldots,z_N ^0)$ for $H_N$:
\begin{equation}\label{eq:trial state}
\mu_N ^t (z_1,\ldots,z_N) := \left(\frac{1}{\pi \eta ^2}\right) ^N \one_{z_1 \in B(z_1 ^0,\eta)} \otimes \ldots \otimes \one_{z_N \in B(z_N ^0,\eta)}. 
\end{equation}
Note that this probability measure is not symmetric under particle exchange, but that is of no concern  because the symmetry of the Hamiltonian implies that the infimum in~\eqref{eq:free ener} is the same with or without symmetry constraint. The entropy of $\mu_N ^t$ is 
$$ S=-\int_{\R ^{2N}} \mu_N ^t \log \mu_N ^t = N \log (\pi \eta ^2),$$
and since the temperature is $T=N^{-1}$ the  contribution $-TS$ to the free energy is $-\log (\pi \eta ^2)$.
For the energetic part we note that 
$$ \int_{\R ^{2N}} H_N (Z_N) \mu_N ^t (Z_N) dZ_N \leq \sum_{j=1 }^N \frac{1}{\pi \eta ^2} \int_{B(z_j ^0,\eta)} |z| ^2 dz + \frac{2\ell}{N-1} \sum_{1 \leq i<j \leq N } w(z_i ^0-z_j ^0) + \frac{1}{N-1} W(Z_N ^0),$$
because superharmonicity in each variable of the function 
$$(z_1,\ldots,z_N) \mapsto \frac{2\ell}{N-1} \sum_{1 \leq i<j \leq N } w(z_i-z_j) + \frac{1}{N-1} W(Z_N)$$
implies that it must decrease upon taking an average over balls centered at the $z_i ^0$'s. For the one-body term we have, integrating in polar coordinates 
$$\frac{1}{\pi \eta ^2} \int_{B(z_j ^0,\eta)} |z| ^2 dz = \frac{1}{\pi \eta ^2} \int_{B(0,\eta)} |z+z_j ^0| ^2 dz = |z_j ^0| ^2 + \frac{1}{\pi \eta ^2} \int_{B(0,\eta)} |z| ^2 dz = |z_j ^0| ^2 + \frac{\eta ^2}{2}.$$
All in all we thus have, for $N$ large enough, 
\begin{align}\label{eq:free ener up bound}
F_N ^ \eps &\leq  H_N (Z_N ^0)+ \eps N \int_{\R ^2} U \mu_N ^{(1)} - \log (\pi) - \log \eta ^2 + N \frac{\eta ^2}{2} \nonumber\\
&\leq  \min_{\R ^{2N}} H_N + \eps N \int_{\R ^2} U \mu_N ^{(1)} + C \left( \log N + 1\right).
\end{align}
where we chose $\eta = (2N) ^{-1/2}$ to optimize the error.

\subsubsection*{Free energy lower bound} To bound $F_N^\eps$ from below we start with the entropy term: we first set 
$$ \nu = \pi^{-1} \exp(-|z| ^2).$$
Then
\begin{align*}
\int_{\R ^{2N}} \mu_N ^\eps \log \mu_N ^\eps &= \int_{\R ^{2N}} \mu_N ^{\eps} \log \frac{\mu_N ^{\eps}}{\nu ^{\otimes N}} + \int_{\R ^{2N}} \mu_N ^\eps \log \nu ^{\otimes N} 
\\&\geq  \int_{\R ^{2N}} \mu_N ^\eps \log \nu ^{\otimes N} 
\\&=  N\int_{\R ^2} (\mu_N ^{\eps}) ^{(1)} \log \nu = - N\int_{\R ^2} |z| ^2 (\mu_N ^{\eps}) ^{(1)} - N \log \pi 
\end{align*}
using positivity of the relative entropy of two probability measures. Adding the energy term we obtain 
\begin{align}
F_N ^\eps &=  \int_{\R ^{2N}} H_N ^\eps (Z_N) \mu_N ^\eps (Z_N) dZ_N + N ^{-1} \int_{\R ^{2N}} \mu_N ^\eps \log \mu_N ^\eps \nonumber\\
&\geq \int_{\R ^{2N}} \left(H_N ^\eps (Z_N) - N ^{-1} \sum_{j=1} ^N |z_j| ^2 \right) \mu_N ^\eps (Z_N)  dZ_N  - \log \pi\nonumber\\
&\geq \min_{Z_N \in \R ^{2N}} \left\{H_N ^\eps (Z_N) - N ^{-1} \sum_{j=1} ^N |z_j| ^2  \right\} - \log \pi\label{eq:curlybrace}\\
&\geq \min_{Z_N \in \R ^{2N}} H_N ^\eps (Z_N) - C = H_N ^{\eps} (Z_N ^{\eps}) - C \label{eq:curlybrace2}\\
&\geq H_N (Z_N^\eps) + N \eps \int_{\R ^2} U (z) \rho_0 ^{\eps} - C \nonumber\\
&\geq \min_{\R ^{2N}} H_N + N \eps \int_{\R ^2} U \rho_0 ^{\eps} - C\label{eq:free ener low bound}
\end{align}
where \eqref{eq:curlybrace2} follows from a simple estimate on the value of $\sum_{j= 1} ^N |z_j|^2$ at a minimum configuration of the curly brace in \eqref{eq:curlybrace} (see similar computations in~\cite[Section 3]{RouYng-14}), and 
$ \rho_0 ^{\eps}$ is the empirical measure of a minimizing configuration $Z_N ^{\eps}$ for $H_N ^{\eps}$ defined in 
\eqref{eq:ground density}.

\subsubsection*{Density bound}    Combining~\eqref{eq:free ener up bound} and~\eqref{eq:free ener low bound} we obtain 
$$ \int_{\R ^2} U \mu_N ^{(1)} \geq \int_{\R ^2} U \rho_0 ^{\eps} - \frac{C}{\eps N}\left( 1 + \log N\right)$$
and we may apply Corollary~\ref{cor:dens bound} to $\rho_0 ^\eps$ to obtain a probability density $\rho := \rhot_{0} ^\eps$ satisfying the requirements of Proposition~\ref{pro:incomp pre} and such that  \eqref{eq:incomp two} holds, i.e.,
$$ \int_{\R ^2} U \mu_N ^{(1)} \geq \int_{\R ^2} U \rho - \frac{C}{\eps N}\left( 1 + \log N\right) - C N ^{-1/2}\norm{\nabla U}_{L ^{\infty}}.$$
\hfill\qed

\subsection{Conclusion of the proof of Theorem~\ref{thm:main}}\label{sec:concl proof}

To conclude the proof of our main theorem we proceed as in~\cite[Section 4.1]{RouYng-14}. We pick some (sequence of) correlation factor(s) $F\in \BargN$, construct the corresponding  state(s) $\Psi\in\mathcal{L} ^N_\ell$, normalized in $L^2$, and Gibbs measure(s) $\mu_N$. We also pick a large constant $B$ (to be tuned later on) and define the truncated potential
\begin{equation}\label{eq:truncated potential}
V_B (\xbf):= \min\{V(\xbf),B\}. 
\end{equation}
Thanks to~\eqref{eq:increase V}, this potential is constant outside of some ball centered at the origin and satisfies the assumptions of Proposition~\ref{pro:incomp pre}. We may thus apply this result with $U=V_B$ and the correlation factor $F$ at hand. In view of~\eqref{eq:Gibbs state} we have 
\[
\mu_N ^{(1)} (z) = (N-1) \rhoP \left( \sqrt{N-1}\: z \right) 
\]
and the proposition implies that there exists a $\rho=\rho_F$ of unit $L ^1$ norm satisfying
\[
0\leq \rhoF \leq M_{\eps,N} := \frac{4}{\pi \ell} \left(1+8\eps^{1/2} \norm{\Delta V_B}_{L ^{\infty}}^{1/2} \right) \left( 1 - N ^{-1} \right)
\]
such that 
\begin{align}\label{eq:appli incomp two}
\E_N [\Psi_F] &\geq (N-1) \intR V_B (\xbf) \rhoPF \left( \sqrt{N-1}\: \xbf \right)d\xbf  \nonumber \\
&\geq  \int_{\R ^2} V_B\, \rho_F - \frac{C}{\eps N}\left( 1 + \log N\right) - C N ^{-1/2}\norm{\nabla V_B}_{L ^{\infty}}\nonumber\\
&\geq \inf\left\{ \intR V_B \,\rho,\: 0\leq \rho \leq M_{\eps,N}, \,\int_{\R ^2} \rho = 1 \right\} - \frac{C}{\eps N}\left( 1 + \log N\right) - C N ^{-1/2}\norm{\nabla V_B}_{L ^{\infty}}
\end{align}
Passing then to the limit $N\to \infty$ at fixed $\eps$ and $B$ we obtain
\[
\liminf_{N\to \infty}  \E_N [\Psi_F] \geq \inf_{\rho}\left\{ \intR V_B\, \rho,\: 0\leq \rho \leq \frac{4}{\pi \ell} \left(1  + 8\eps^{1/2} \norm{\Delta V_B}_{L ^{\infty}}^{1/2}  \right),\: \int_{\R ^2} \rho = 1 \right\}
\]
and, since the right-hand side no longer depends on $F$, we have 
\[
 \liminf_{N\to \infty} E(N) \geq \inf_{\rho} \left\{ \intR V_B\, \rho,\: 0\leq \rho \leq \frac{4}{\pi \ell} \left(1 + 8\eps^{1/2} \norm{\Delta V_B}_{L ^{\infty}}^{1/2} \right),\: \int_{\R ^2} \rho = 1 \right\}.
\]
We may then pass to the limit $\eps \to 0$ at fixed $B$: 
\[
\liminf_{N\to \infty}  E(N) \geq \inf_{\rho} \left\{ \intR V_B\, \rho,\: 0\leq \rho \leq \frac{4}{\pi \ell},\: \int_{\R ^2} \rho = 1 \right\}
\]
and finally to the limit $B\to \infty$, which yields
\[
\liminf_{N\to \infty}  E(N) \geq E_V \left(\ell /4 \right)
\]
as desired. We have used some continuity properties of the bath-tub energy~\eqref{eq:bath tub} as a function of the upper bound on the admissible trial states and the cut-off of the potential. These follow easily from the explicit formula for the minimum bath-tub energy see~\cite[Theorem 1.14]{LieLos-01}. Note in particular that for $B$ large enough, the bathtub energy in $V_B$ is in fact equal to the bathtub energy in $V$.

\hfill \qed

\bibliographystyle{siam}

\end{document}